\newenvironment{proof}[1][Proof]{\begin{trivlist}
\item[\hskip \labelsep {\bfseries #1}]}{\end{trivlist}}
\newcommand{\qed}{\nobreak \ifvmode \relax \else
      \ifdim\lastskip<1.5em \hskip-\lastskip
      \hskip1.5em plus0em minus0.5em \fi \nobreak
      \vrule height0.75em width0.5em depth0.25em\fi}
\newtheorem{prop}{Proposition}
\newtheorem{exmp}{Example}
\begin{document}

\title{Why Scientists Chase Big Problems: Individual Strategy and Social Optimality\thanks{All authors contributed equally to this paper. This work was supported by a generous grant to the Metaknowledge Network from the John Templeton Foundation.}}
\author{Carl T. Bergstrom}
\affil{Department of Biology, University of Washington}
\author{Jacob G. Foster}
\affil{Department of Sociology, UCLA}
\author{Yangbo Song}
\affil{Departments of Economics and Sociology, UCLA; School of Management and Economics, Chinese University of Hong Kong (Shenzhen)}
\date{\today}

\maketitle

\begin{abstract}
Scientists pursue collective knowledge, but they also seek personal recognition from their peers. When scientists decide whether or not to work on a big new problem, they weigh the potential rewards of a major discovery against the costs of setting aside other projects. These self-interested decisions may distribute researchers across problems in an efficient manner, but efficiency is not guaranteed. We use simple economic models to understand such decisions and their collective consequences. Academic science differs from industrial R\&D in that academics often share partial solutions to gain reputation. This convention of Open Science is thought to accelerate collective discovery, but we find that it need not do so. The ability to share partial results influences which scientists work on a particular problem; consequently, Open Science can slow down the solution of a problem if it deters entry by important actors.
\end{abstract}

\section{Introduction}

In the development of a scientific field, problems arise that are widely recognized as important. Their solutions may be critical to progress in the field, valuable to society, or both. Faced with such a problem, researchers in the field must make a choice: should they continue pursuing their workaday lines of inquiry, or should they join the race to solve the problem? This is a strategic decision, since an individual scientist's best choice depends on the choices of other researchers. A problem worth pursuing in the absence of competition may be a bad bet if several top research teams are racing for the solution.

This decision has both personal and societal consequences. An individual scientist can obtain considerable prestige by solving a recognized important problem, but she will have wasted precious research effort should she be beaten to the solution by a competitor. For problems with significant social impact, society can benefit when many scientists join the race and as a result the problem is rapidly solved. On the other hand, it can be wasteful if too many scientists pursue a single problem in parallel, because too much effort is diverted away from other research activity.

There is no \emph{a priori} reason why strategic individual behavior should generate socially optimal outcomes. If we want to understand the conditions under which scientific institutions support more or less efficient allocations of cognitive labor \citep{kitcher1990division}, we need a model in which aggregate behavior arises from individual allocation decisions---decisions made in response to incentives, given preferences over time and risk. This approach naturally falls into a domain sometimes dubbed the ``new'' economics of science \citep{partha1994toward,stephan1996economics}. Building on early contributions by seminal figures such as  \citet{nelson1959economics} and \citet{arrow1962economic}, this research program emphasizes the analysis of distinctive scientific institutions; draws explicitly on classic contributions from the sociology of science \citep{merton1957priorities,merton1968matthew,merton1973sociology,latour1987science}; and uses a variety of game theoretic models to analyze distinct steps in the production of scientific and technical knowledge, from problem choice \citep{dasgupta1987} to the timing of research \citep{dasgupta1988} to information sharing \citep{dasgupta1987information}. 

We draw here on two other streams of literature, in addition to the new economics of science discussed above. One stream comes from the philosophy of science, and the other comes from the microeconomic study of patenting. The philosophy stream has its roots in early work by Peirce (1897)\nocite{peirce1897}, and tries to understand how apparently idiosyncratic features of the scientific credit systems lead to more or less efficient allocations of scientists to problems \citep{kitcher1995advancement, strevens2003role, strevens2006role, strevens2011economic}. 
Within this stream, researchers have used a variety of formal
models to analyze distinct steps in the production of scientific and technical
knowledge, from problem choice \citep{kleinberg2011mechanisms} to the timing of research, to team formation \citep{bikard2015exploring,boyer2015scientific}, to information sharing norms \citep{zollman2009optimal}. 

Here we build specifically on \cite{kleinberg2011mechanisms}. These authors find that seemingly unfair aspects of credit allocation like the Matthew effect \citep{merton1968matthew} can support an efficient distribution of scientists across problems. However, they also demonstrate that there are generic cases in which the credit system possesses insufficient degrees of freedom to impel socially efficient outcomes. But their model is atemporal; thus it cannot be used for treating decisions about {\em when} to publish, nor can it be readily extended to account for time preferences on the part of scientists or of planners. As we will see, accounting for time preferences will be important in a nuanced analysis of social welfare and efficiency. Moreover, heterogeneity in scientists' time preferences can further reduce the prospects for socially optimal allocations. 

By contrast, the relevant microeconomic literature on patenting tackles the issue of time directly. Early economic models of patent races explore the R\&D investment decisions that firms make in an effort to secure intellectual property rights \citep{loury1979market,lee1980market}; see \cite{stiglitz2014creating} for a contemporary treatment. In these analyses, the patent race is typically modeled as a symmetric game without {\em ex ante} heterogeneity among participants in competitive ability, opportunity cost, or time preferences. These modeling decision assure a unique equilibrium.

Building on the initial patent models, later authors focus on the effect of partial information disclosure. A firm may disclose partial information to signal its strength and commitment to the race \citep{anton2003,gill2008}, to increase the entire market demand \citep{defraja1993}, to induce exit of risk-averse competitors \citep{bhattacharya1983,lichtman2000,baker2005} or to establish prior art as a defensive measure \citep{parchomovsky2000}. 

Like the contributors to this literature, we are concerned with identifying incentives for sharing valuable information with potential competitors. Because we are modeling the decisions of academic scientists, however, we allow the disclosure of partial results to be rewarding \textit{per se}: disclosure confers recognition, which is a foundation of academic science \citep{merton1957priorities,merton1973sociology}. 

In the past two years, researchers have begun to model 
how academic priority races influence decisions about whether to disclose partial results. The basic tension is the same as we explore here: by publishing partial results, a researcher or research team is assured some credit for the progress already made, at the cost of ceding any lead in the  race to solve the bigger problem. 
\citet{boyer2014bird} lays out a basic model of this scenario, and finds that in some circumstances the risk of losing the larger race to a major finding creates incentives for researchers to settle for the bird-in-hand, publishing intermediate results and claiming what credit they can.   
\citet{banerjee2014re} consider a more complicated model of scientific progress as a ``treasure hunt'' game in which multiple alternate paths may lead to the end result. Like Boyer (2014), they find that researchers will often choose to obtain partial credit by publishing intermediate findings along the way. \citet{heesen2015communism} finds a clever way to depict multi-stage credit races as an extensive-form game of incomplete information; under certain symmetry assumptions, sharing all partial results is a unique Nash equilibrium in his model.

Our analysis diverges from these treatments because we provide the researchers in our model with an outside option: we allow them to opt in or out of the focal problem. This is important for a several reasons. First, by making participation a strategic decision, our model highlights the strategic dependence of one agent's payoffs for opting in on the decisions of other agents about whether to participate as well \footnote{Oren and Kleinberg are able to capture a similar phenomenon by developing a model in which researchers can choose among multiple alternative projects.}. Second, by allowing scientists to opt in or opt out of the focal problem, we are able to make explicit social welfare comparisons with regard to the under- or over-allocation of scientists to given problems. Third, our approach captures strategic complexities that recommend against maximal information sharing. When everyone \emph{has} to opt in, sharing obviously accelerates the solution of the full problem. Information sharing becomes strategically interesting once individuals can opt out; different sharing rules have different consequences for who participates in each stage. For example, when scientists are able to share partial results, certain individuals might drop out of the race once the easy stages have been completed, whereas they would have stayed to the end if only the complete solution had been publishable. Thus it is far from trivial whether information sharing will accelerate overall progress on the focal problem. 

In this paper, we describe a formal model of scientific decision-making by scientists who act as self-interested strategic agents rather than idealized seekers of knowledge\footnote{This approach shares much with the picture arising from ethnographic and historical work within science studies \citep{bourdieu1975specificity,panofsky2014misbehaving, latour1987science}. By connecting formal methods with such qualitative analyses, we can provide science studies with formal microfoundations, and ground the modeling work of economists and philosophers in the finely textured observation of practicing scientists.}. In this model, scientists choose whether to forego the rewards of their everyday research and enter a race to solve an important focal problem. We show that this game has a unique risk-dominant equilibrium under reasonable assumptions and demonstrate that the existence of this equilibrium is robust to heterogeneity among scientists in scientific capital. We then extend the model to allow the publication of partial results. We provide a framework for teasing apart scientists' incentives from society's outcomes. We demonstrate that the institution of Open Science (i.e., the publication of partial results) is not universally beneficial, because it can slow the time to solution for important problems and thereby decrease social welfare. 

\section{Model setup}
Our primary goal is to understand scientists' decision to participate in a race to solve a commonly acknowledged, widely valued problem $\mathcal{P}$. We hence consider the set of all scientists $I=\{1,2,\cdots,N\}$ that could conceivably work on this focal problem. Time $t$ is continuous. At $t=0$ the existence and importance of the problem becomes common knowledge. In practice, this big problem could arise in many ways: the development of an enabling technique or technology (e.g., CRISPR/Cas9 gene editing technology allowing us to develop gene drive systems to eliminate mosquito malaria vectors; \cite{hammond2016crispr}); the sudden emergence of a socially-relevant challenge (e.g., the AIDS epidemic of the 1980's and the need to identify its etiological agent; \cite{epstein1996impure}); or the pressure to resolve contradictory lines of evidence or an embarrassment of accumulating anomalies \citep{kuhn2012structure}, with examples ranging from the role of persistent antigen in preserving immunological memory \citep{antia1998models} to the Lamb shift in the energy levels of hydrogen, which stimulated the development of quantum electrodynamics \citep{schweber1994qed}. In the simplest case, we assume that scientists have a single opportunity to opt in; if a scientist does not participate at $t=0$, she cannot participate at any point later in time (we will relax this assumption subsequently). To participate, scientists must make an initial outlay $F>0$, representing the time needed to master any relevant techniques and/or the funds required to equip or staff the lab to pursue problem $\mathcal{P}$. We assume that this outlay is the same across scientists.

We will model scientists' decisions to work on problems that, while universally acknowledged as important, do not have a known method of solution and hence require more than rote effort. We capture this feature of the problem through the distribution of the time to solution $\tau_i$. We assume that $\tau_i$ follows an exponential distribution\footnote{Huber (1998, 2001; cf. Heesen 2015) \nocite{huber1998invention,huber2001new} shows that patents and papers are distributed across a career in a manner consistent with an exponential distribution of times-to-solution.}:
\begin{align*}	
\mbox{Prob}(\tau_i\leq t)=1-e^{-\frac{h_i}{d}t}.
\end{align*}

Scientists differ in their intrinsic ability to make progress on the focal problem and in the amount of capital (material, financial, human, social) they can bring to bear. We capture this difference with an scientist-specific parameter $h_i$ quantifying $i$'s scientific capital. Problems also vary in their intrinsic difficulty; we capture this variation in the difficulty of the focal problem with a parameter $d$. A scientist's instantaneous probability of solving the focal problem is thus $\frac{h_i}{d}$. All else being equal, problems with higher $d$ will take longer to solve, and scientists with lower $h_i$ will take longer to solve them. We assume common knowledge of these parameters.

Under the priority rule, the first scientist to solve the problem gets a fixed payoff $V>0$ (but see \cite{merton1968matthew,strevens2003role,strevens2006role}) and her unsuccessful competitors in the race get nothing. Racing to solve the focal problem diverts researchers from their usual work. Therefore, participants in the race incur an opportunity cost of $x$ per unit of scientific capital per unit time, i.e., if researcher $i$ works on $\mathcal{P}$ for time $t$ she will pay an opportunity cost $h_i\,x\,t$. We assume in this simplest case that $x$ is the same across all scientists and that there is no time-discount of payoffs, although we will relax both of these assumptions subsequently.

If a scientist chooses not to participate, she gets a payoff of $0$. A scientist will only benefit from participation if her instantaneous return from participation is positive, i.e., $V\, \frac{h_i}{ d}  - h_i\,x > 0$; but because of the fixed cost this not a sufficient condition for participation. Given this, continuing to participate until the problem is solved strictly dominates any strategy with an earlier exit, because if one stops early, one gets an instantaneous return of 0. Therefore we only consider strategies in which scientists persist until a problem is solved.  Now we can easily compute the expected payoff of any scientist $i$ who does participate. Let $I'\subset I$ denote the set of participating scientists. Then $h_{I'} = \sum_{j \in I'} h_j$ is the total scientific capital of all participating scientists and $h_{I'-i}=\sum_{j\in I',j\neq i}h_j$ is the total scientific capital of all of $i$'s competitors. Given an exponential distribution of solution times, $i$ solves the problem with probability $\frac{h_i}{h_{I'}}$ and the expected time to solution is $\frac{d}{h_{I'}}$. Thus, the expected payoff to $i$ of participation is
\begin{align*}
U_i(I',h_i,h_{I'-i})=V\frac{h_i}{h_{I'}} - h_i\,x\frac{d}{h_{I'}} - F
\end{align*}

We define an \textbf{equilibrium} of this game as a subset of participating scientists $I^*\subset I$ such that
\begin{align*}
U_i(I^*,h_i,h_{I^*-i})&\geq 0\text{ for every $i\in I^*$}\\
U_j(I^*+j,h_j,h_{I^*})&\leq 0\text{ for every $j\notin I^*$}
\end{align*}
This is a Nash equilibrium, i.e., no scientist can unilaterally improve her expected payoff by altering her decision whether or not to participate. We will refer to the model described here as the {\em basic game.}

\section{Results}

\subsection{Multiple Equilibria and Risk Dominance}

As in typical patent race models, a Nash equilibrium is guaranteed to exist. Unlike much of the patent race literature, our assumption of {\em ex ante} variation in scientific capital means that there are generically multiple equilibria even in our most basic model. We first prove the existence of an equilibrium.

\begin{prop}
A Nash equilibrium always exists in the basic game.
\end{prop}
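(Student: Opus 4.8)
The plan is to rewrite the participation payoff in the compact form $U_i(I',h_i,h_{I'-i}) = \frac{h_i W}{h_{I'}} - F$, where $W := V - xd$, and then split on the sign of $W - F$. If $W \le F$, I would simply check that the empty coalition $I^* = \emptyset$ is an equilibrium: the participation conditions are vacuous, and any lone entrant $j$ would receive $U_j(\{j\},h_j,0) = W - F \le 0$, so no $j\notin I^*$ wants to opt in. Thus the only substantive case is $W > F$, where one expects a nonempty equilibrium, and the rest of the argument builds one explicitly.

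For $W > F$, the idea is to exploit the fact that $U_i$ depends on scientist $i$ only through $h_i$ and is monotone: increasing in $h_i$ and decreasing in the aggregate $h_{I'}$. Relabel so that $h_1 \ge h_2 \ge \cdots \ge h_N$ and restrict attention to the nested ``top-$k$'' coalitions $I_k = \{1,\dots,k\}$, with $h_{I_k} = \sum_{j \le k} h_j$. Inside such a coalition the participation constraint $U_i(I_k,h_i,h_{I_k - i}) \ge 0$ is hardest to meet for the lowest-capital member $i = k$, and the entry constraint $U_j(I_k+j,h_j,h_{I_k}) \le 0$ is hardest to meet for the highest-capital outsider $j = k+1$ (here one uses $W > F$, which makes $h_j \mapsto h_j(W - F)$ increasing, so the entry temptation is indeed strongest for the top outsider). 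Consequently $I_k$ is an equilibrium if and only if the two scalar inequalities $h_k W \ge F\,h_{I_k}$ and $h_{k+1}(W - F) \le F\,h_{I_k}$ hold, the second being vacuous when $k = N$.

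Next I would let $k^*$ be the largest $k \in \{1,\dots,N\}$ with $h_k W \ge F\,h_{I_k}$. This is well defined: the quantity $g(k) := h_k W - F\,h_{I_k}$ is strictly decreasing in $k$ (the $h_k$ are positive and non-increasing while $h_{I_k}$ is strictly increasing), and $g(1) = h_1(W - F) > 0$, so $\{k : g(k)\ge 0\}$ is a nonempty initial segment $\{1,\dots,k^*\}$. By construction $I_{k^*}$ satisfies the participation constraint. For the entry constraint, observe that if $k^* < N$ then $k^*+1$ fails the defining inequality, i.e. $h_{k^*+1} W < F\,h_{I_{k^*+1}} = F\,(h_{I_{k^*}} + h_{k^*+1})$, and this rearranges to exactly $h_{k^*+1}(W - F) < F\,h_{I_{k^*}}$. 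Hence $I_{k^*}$ is a Nash equilibrium, and the proof is complete.

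The step I expect to carry the weight is the reduction in the second paragraph: showing that among the exponentially many candidate coalitions it suffices to examine only the $N+1$ nested top-$k$ sets, and that an equilibrium within this family is an equilibrium of the full game. This amounts to verifying carefully that (i) the lowest-capital insider gives the binding participation constraint, (ii) the highest-capital outsider gives the binding entry constraint, and (iii) these extremal inequalities imply the corresponding inequalities for every other insider and outsider — all of which follow from the monotonicity of $U_i$ in $h_i$ once the sign condition $W > F$ is in force. A secondary point worth stating precisely is that the two constructed inequalities are not independent: the entry condition at $k^*+1$ is a consequence of the \emph{failure} of the feasibility inequality there, which is why the greedy threshold $k^*$ works ``by design.'' The remaining boundary bookkeeping — the degenerate regime $W \le F$, ties among the $h_i$, and the case $k^* = N$ with no outsiders — is routine, and no fixed-point machinery is needed.
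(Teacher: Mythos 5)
Your proof is correct and follows essentially the same route as the paper's: order scientists by descending scientific capital and take the largest threshold index (your $k^*$, the paper's $i^*$) at which the marginal participant's payoff $\frac{h_i(V-xd)}{\sum_{j=1}^i h_j}-F$ is still nonnegative, then check that the next scientist's failure of this inequality is exactly the no-entry condition. You are somewhat more explicit than the paper about the degenerate case $V-xd\leq F$ and about why the binding constraints sit at the lowest-capital insider and the highest-capital outsider, but the underlying construction is identical.
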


\begin{proof}
For notational convenience, we construct the index set $I=\{1,2,\cdots,N\}$ such that scientists are labeled in descending order of scientific capital $h_i$: $h_1\geq h_2\geq\cdots\geq h_N$. Now let $I^*=\{1,\cdots,i^*\}$ where $i^*=\max\{i:(\frac{V}{d}-x)\frac{d\,h_i}{\sum_{j=1}^i h_j}-F\geq 0\}$, and consider $I^*$ as the participating group of scientists. Note that for each scientist $i<i^*$, her expected payoff from participation is strictly higher than that of scientist $i^*$, and hence strictly positive; also by construction of $i^*$, if any scientist $i>i^*$ chose to participate, her expected payoff would be strictly negative. Therefore, we know that $I^*$ is an equilibrium. This completes the proof.
\end{proof}

Moreover, there are commonly multiple equilibria in this game.

\begin{exmp}
Consider three paleolinguists: Alice, Bob, and Carol. Renowned scholar Alice has scientific capital $h_a=10$, whereas mid-career colleague Bob has scientific capital $h_b=5$, and early career researcher Carol has $h_c=4$ respectively. A sequence of symbols from the upper Paleolithic is discovered in an Australian cave, and the race is on to decode its meaning. The value of finding the solution is $V=20$, the difficulty of the problem is $d=5$ and the opportunity cost of working on the problem is $x=1$ per unit scientific capital per unit time. In addition, working on the problem incurs an immediate fixed cost of $F=4$ for each scientist.

If Alice and Bob work on the problem, it is not worth it for Carol to enter because her expected payoff from participation would be negative. If Alice and Carol work on the problem, Bob will stay out for a similar reason. If Bob and Carol were to work on the problem, Alice would \emph{not} stay out; her expected payoff from participation would be positive. If Alice joins, it would behoove either Bob or Carol to drop, because either one's expected payoff from participation would be negative given that the other stayed. Therefore this game has two Nash equilibria. In one, the participants are $\{Alice, Bob\}$ and in the other $\{Alice, Carol\}$.
\end{exmp}

At first glance, multiple equilibria are problematic: they undercut the predictive value of the theory. However, intuition suggests that the Alice-Bob equilibrium is more plausible than the Alice-Carol equilibrium. After all, if Carol mistakenly enters the Alice-Bob equilibrium, the cost to Carol would be higher than the cost to Bob if Bob mistakenly enters the Alice-Carol equilibrium; this is because Carol has less scientific capital than Bob. The same intuition applies to accidental exit by the scientists participating in an equilibrium: if Bob mistakenly exits the Alice-Bob equilibrium, the expected loss to Bob (i.e., the expected payoff that Bob could have earned) would be higher than the expected loss to Carol if Carol mistakenly exits the Alice-Carol equilibrium.

For two-player games, economists formalize this kind of intuition with the concept of risk dominance \citep{harsanyi1988general}. We extend the notion of risk dominance to $n$ player games (cf. \nocite{kojima2006risk} Kojima 2006). Let $I^*\setminus\hat{I}$ be the set of scientists in $I^*$ but not $\hat{I}$. We define an equilibrium $I^*$ as \textit{risk dominant} if for any equilibrium $\hat{I}$, the following inequality holds:
\begin{align*}
&|\sum_{i\in I^*\setminus\hat{I}}U_i(I^*,h_i,h_{I^*-i})\times\sum_{j\in \hat{I}\setminus I^*}U_j(I^*\cup \hat{I},h_j,h_{I^*\cup \hat{I}-j})|\\
\geq &|\sum_{j\in \hat{I}\setminus I^*}U_j(\hat{I},h_j,h_{\hat{I}-j})\times\sum_{i\in I^*\setminus \hat{I}}U_i(I^*\cup \hat{I},h_i,h_{I^*\cup \hat{I}-i})|
\end{align*}

The first term on the left-hand side, $\sum_{i\in I^*\setminus\hat{I}}U_i(I^*,h_i,h_{I^*-i})$, is the total loss of the subgroup $I^*\setminus\hat{I}$ should they play their strategies in $\hat{I}$ instead of their strategies in $I^*$. In our game, this is the total loss to  $I^*\setminus\hat{I}$ if they fail to participate when they should have. The second term $\sum_{j\in \hat{I}\setminus I^*}U_j(I^*\cup \hat{I},h_j,h_{I^*\cup \hat{I}-j})$ is the total loss of the subgroup $\hat{I}\setminus I^*$ should they play their strategies in $\hat{I}$ instead of their strategies in $I^*$. In our game, this is the total loss to $\hat{I}\setminus I^*$ if they participate when they should have stayed out. Note that in both cases the strategies of the other scientists are held constant. The right-hand side of the inequality can be interpreted in a similar way by flipping the two equilibria.

We focus on generic cases in which the $h_i$'s are different from one another. To make the problem non-trivial, we assume that $V-dx-F>0$. To avoid excessive technicality, we assume that a scientist participates when she is indifferent between participation and non-participation. We thus prove the following proposition:

\begin{prop}
A risk dominant equilibrium exists and is unique: $I^*=\{1,\cdots,i^*\}$ where $i^*=\max\{i:V\frac{h_i}{\sum_{j=1}^ih_j} - h_i\,x\frac{d}{\sum_{j=1}^ih_j} - F\geq 0\}$.
\end{prop}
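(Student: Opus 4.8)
The plan is to show that the particular set $I^*$ (which is an equilibrium by the existence proposition proved above) risk-dominates every equilibrium, and that no other equilibrium does. The engine will be an algebraic identity that collapses the risk-dominance inequality into a single scalar comparison, combined with a few structural facts about how an arbitrary equilibrium sits relative to $I^*$. Throughout write $K:=V-dx$, so that $K>F>0$ by hypothesis, and note that a participant $i$ in a set $I'$ earns $U_i(I')=Kh_i/h_{I'}-F$, a quantity that is nonincreasing in $I'$.

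First I would record the structure. Since $h_i/\sum_{j\le i}h_j$ is strictly decreasing in $i$ (the numerator falls, the denominator rises), $i^*$ is well defined and $U_i(\{1,\dots,i\})\ge 0$ for $i\le i^*$ while $U_i(\{1,\dots,i\})<0$ for $i>i^*$; from this one checks quickly that no proper subset of $I^*$ and no set properly containing $I^*$ is an equilibrium. Hence for any equilibrium $\hat I\ne I^*$ the sets $A:=I^*\setminus\hat I$ and $B:=\hat I\setminus I^*$ are both nonempty, and, the $h_i$ being distinct, every scientist in $A$ has strictly larger capital than every scientist in $B$ (indices in $A$ are $\le i^*<$ indices in $B$). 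The equilibrium conditions for $I^*$ and for $\hat I$ give, for each $i\in A$, $h_i\ge Fh_{I^*}/K$ (since $i\in I^*$) and $h_i\le Fh_{\hat I}/(K-F)$ (since $i\notin\hat I$ and $\hat I$ is an equilibrium), with the symmetric bounds for $j\in B$. The key combinatorial fact I would extract from these is $|A|\le|B|$: letting $\mu=\min_{i\in A}h_i$, $\nu=\max_{i\in A}h_i$ and $C=I^*\cap\hat I$, the bounds $h_{I^*}\le K\mu/F$ and $h_{\hat I}\ge(K-F)\nu/F$ combine with $h_{I^*}-h_A=h_{\hat I}-h_B=h_C$ to force $K(\nu-\mu)+F(h_A-\nu-h_B)\le 0$; since $\nu\ge\mu$ and $h_A-\nu$ is a sum of $|A|-1$ capitals each exceeding every capital in $B$, this is violated as soon as $|A|-1\ge|B|$.

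Next I would carry out the collapse. Put $s=h_{I^*}$, $t=h_{\hat I}$. Monotonicity of $U_i$ fixes the signs of the four sums in the risk-dominance inequality: $\sum_{i\in A}U_i(I^*)\ge 0$ and $\sum_{j\in B}U_j(\hat I)\ge 0$, whereas $\sum_{i\in A}U_i(I^*\cup\hat I)\le 0$ and $\sum_{j\in B}U_j(I^*\cup\hat I)\le 0$ (for instance the third because each summand is at most $U_i(\hat I+i)\le 0$). Substituting $U_i(I')=Kh_i/h_{I'}-F$, using $h_{I^*\cup\hat I}=s+h_B=t+h_A$ to clear denominators, and then expanding, the cross terms recombine—via the definitions $\alpha:=Kh_A-|A|Fs$ and $\beta:=Kh_B-|B|Ft$—so that the whole inequality becomes $Kh_Ah_B(\alpha-\beta)\ge 0$, i.e.\ simply $\alpha\ge\beta$. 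Since $\alpha=s(K-|A|F)-Kh_C$ and $\beta=t(K-|B|F)-Kh_C$, this is exactly $s(K-|A|F)\ge t(K-|B|F)$; the factors $K-|A|F$ and $K-|B|F$ are nonnegative because $|I^*|F\le K$.

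Finally I would prove $s(K-|A|F)\ge t(K-|B|F)$, splitting on the cardinality lemma. If $|A|=|B|$, pairing the sorted capitals of $A$ with those of $B$ gives $h_A>h_B$, hence $s>t$, and since the two factors coincide the inequality holds (strictly). If $|A|<|B|$, so $|B|\ge|A|+1$, I would bound $t(K-|B|F)=(h_B+h_C)(K-|B|F)$ from above using $h_C\le s(K-|A|F)/K$ and $h_B\le|B|Fs/(K-F)$; the desired inequality then reduces to the numerical fact $(K-|A|F)(K-F)\ge K(K-|B|F)$, which rearranges to $K(|B|-|A|-1)+|A|F\ge 0$ and is clear when $|B|\ge|A|+1$. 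Under the genericity hypothesis all of these inequalities are strict, so running the same collapse with the roles of $I^*$ and $\hat I$ exchanged shows $\hat I$ fails its own risk-dominance test against $I^*$; hence $I^*$ is the unique risk-dominant equilibrium. The hard part will be the cardinality lemma $|A|\le|B|$ of Step 1 and, relatedly, arranging the estimates in Step 3 tightly enough that the final numerical inequality lands on the correct side: a crude bound on $\beta$ overshoots and fails precisely in the case $|A|>|B|$, which is exactly the case the lemma rules out.
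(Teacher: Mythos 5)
Your proposal is correct, and it reaches the conclusion by a genuinely different route than the paper. The paper never collapses the risk-dominance condition into a single inequality; instead it proves two separate sufficient inequalities --- comparing the positive sums $\sum_{i\in I^*\setminus\hat I}U_i(I^*,\cdot)$ vs.\ $\sum_{j\in\hat I\setminus I^*}U_j(\hat I,\cdot)$ and, separately, the negative ``mistaken entry'' sums evaluated at $I^*\cup\hat I$ --- with a case split on whether $h_{I^*}\ge h_{\hat I}$ (where the raw sums are compared, via total equilibrium payoffs $V-dx-|I'|F$) or $h_{I^*}<h_{\hat I}$ (where the per-capita sums are compared); the product of the two comparisons then yields the defining inequality. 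Your exact identity $stc\,(P_2N_1-P_1N_2)=Kh_Ah_B(\alpha-\beta)$, which I have verified, is stronger: it shows risk dominance of $I^*$ over $\hat I$ is \emph{equivalent} to $h_{I^*}(K-|A|F)\ge h_{\hat I}(K-|B|F)$, so the sufficiency direction and the uniqueness direction (running the collapse with roles exchanged) come for free from strictness, and your case split is on cardinalities rather than total capital. You also supply an actual proof of the cardinality fact $|I^*\setminus\hat I|\le|\hat I\setminus I^*|$, which the paper asserts in one line (``for $\hat I$ to be an equilibrium, it must be the case that $m\le n$'') without argument --- your derivation from the entry/exit conditions via $\mu,\nu$ is a real addition. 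Two small points to tighten: the nonnegativity of $K-|A|F$ and $K-|B|F$ deserves the one-line justification you gesture at (sum of nonnegative equilibrium payoffs equals $K-|I'|F$), and in the case $|A|=|B|$ strictness requires ruling out $K=|A|F$, which follows from $V-dx-F>0$ together with distinctness of the $h_i$ rather than from genericity of the $h_i$ alone; neither gap is serious, and the paper's own strictness argument at the analogous point is comparably terse.
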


\begin{proof}
From the assumption that $V-dx-F>0$ we know that $I^*$ is not empty; by construction of $i^*$ we know that $I^*$ is an equilibrium.

Now consider any other equilibrium $\hat{I}\neq I^*$. Again by construction of $i^*$, we know that both $I^*\setminus\hat{I}$ and $\hat{I}\setminus I^*$ are non-empty. Let $I^*\setminus\hat{I}=\{i_1,\cdots,i_m\}$ and $\hat{I}\setminus I^*=\{j_1,\cdots,j_n\}$, and re-label the scientists such that $h_{i_1}>h_{i_2}>\cdots>h_{i_m}$ and $h_{j_1}>h_{j_2}>\cdots>h_{j_n}$. It immediately follows that $i_m\geq i^*>j_1$. Also, for $\hat{I}$ to be an equilibrium, it must be the case that $m\leq n$.

For notational convenience, let $A=\sum_{i\in I^*}h_i$, $B=\sum_{j\in \hat{I}}h_j$, and $C=A+\sum_{p=1}^n h_{j_p}=B+\sum_{q=1}^m h_{i_q}$. Let $l=|I^*\cap \hat{I}|$. We discuss two cases: $A\geq B$ and $A<B$.

When $A\geq B$: to show that $I^*$ is the unique risk dominant equilibrium, it suffices to show that
\begin{align}
\sum_{i\in I^*\setminus\hat{I}}U_i(I^*,h_i,h_{I^*-i})&>\sum_{j\in \hat{I}\setminus I^*}U_j(\hat{I},h_j,h_{\hat{I}-j})\\
\sum_{j\in \hat{I}\setminus I^*}U_j(I^*\cup \hat{I},h_j,h_{I^*\cup \hat{I}-j})&<\sum_{i\in I^*\setminus \hat{I}}U_i(I^*\cup \hat{I},h_i,h_{I^*\cup \hat{I}-i}).
\end{align}
Note that in $(2)$, the term on either side has a negative value.

To prove $(1)$, we compute the total expected payoff in the two equilibria:
\begin{align*}
\sum_{i\in I^*}U_i(I^*,h_i,h_{-i})&=V-dx-(m+l)F\\
\sum_{j\in \hat{I}}U_j(\hat{I},h_j,h_{-j})&=V-dx-(n+l)F.
\end{align*}
From $m\leq n$, we have
\begin{align}
\sum_{i\in I^*}U_i(I^*,h_i,h_{I^*-i})> \sum_{j\in \hat{I}}U_j(\hat{I},h_j,h_{\hat{I}-j}).
\end{align}
Now, notice that
\begin{align*}
\sum_{k\in I^*\cap\hat{I}}U_k(I^*,h_k,h_{I^*-k})&=(\frac{V}{d}-x)\frac{d\sum_{k\in I^*\cap\hat{I}}h_k}{A}-l\,F\\
\sum_{k\in I^*\cap\hat{I}}U_k(\hat{I},h_k,h_{\hat{I}-k})&=(\frac{V}{d}-x)\frac{d\sum_{k\in I^*\cap\hat{I}}h_k}{B}-l\,F.
\end{align*}
From $A\geq B$, we have
\begin{align}
\sum_{k\in I^*\cap\hat{I}}U_k(I^*,h_k,h_{I^*-k})\leq \sum_{k\in I^*\cap\hat{I}}U_k(\hat{I},h_k,h_{I^*-k}).
\end{align}
Combining $(3)$ and $(4)$, and noticing that $m\leq n$ and $A\geq B$ cannot be binding at the same time (i.e., at least one of them must be strict), we have the desired inequality $(1)$.

To prove $(2)$, we write the terms on both sides as
\begin{align*}
\sum_{i\in I^*}U_i(I^*\cup \hat{I},h_i,h_{I^*\cup \hat{I}-i})&=(\frac{V}{d}-x)\frac{d\,A}{C}-(m+l)F\\
\sum_{j\in \hat{I}}U_j(I^*\cup \hat{I},h_j,h_{I^*\cup \hat{I}-j})&=(\frac{V}{d}-x)\frac{d\,B}{C}-(n+l)F.
\end{align*}
Following a similar argument to above, we have
\begin{align}
\sum_{i\in I^*}U_i(I^*\cup \hat{I},h_i,h_{I^*\cup \hat{I}-i})> \sum_{j\in \hat{I}}U_j(I^*\cup \hat{I},h_j,h_{I^*\cup \hat{I}-j}).
\end{align}
Subtracting the term on either side of $(5)$ from the total expected payoff in $I^*\cup \hat{I}$ (which is not an equilibrium), we have the desired inequality $(2)$.

When $A<B$: to show that $I^*$ is the unique risk dominant equilibrium, it suffices to show that
\begin{align}
\frac{\sum_{i\in I^*\setminus\hat{I}}U_i(I^*,h_i,h_{I^*-i})}{m}&>\frac{\sum_{j\in \hat{I}\setminus I^*}U_j(\hat{I},h_j,h_{\hat{I}-j})}{n}\\
\frac{\sum_{j\in \hat{I}\setminus I^*}U_j(I^*\cup \hat{I},h_j,h_{I^*\cup \hat{I}-j})}{n}&<\frac{\sum_{i\in I^*\setminus \hat{I}}U_i(I^*\cup \hat{I},h_i,h_{I^*\cup \hat{I}-i})}{m}.
\end{align}
Note that in $(7)$, the term on either side has a negative value.

Let $D=\sum_{q=1}^m h_{i_q}$ and $E=\sum_{p=1}^n h_{j_p}$. The above inequalities can be written as
\begin{align*}
(\frac{V}{d}-x)\frac{D}{m\,A}-F&>(\frac{V}{d}-x)\frac{E}{m\,B}-F\\
(\frac{V}{d}-x)\frac{E}{n\,C}-F&<(\frac{V}{d}-x)\frac{D}{m\,C}-F,
\end{align*}
which are further equivalent to
\begin{align}
\frac{D}{m}\frac{1}{A}&>\frac{E}{n}\frac{1}{B}\\
\frac{E}{n}&<\frac{D}{m},
\end{align}
From the assumption $A<B$, it immediately follows that $\frac{1}{A}>\frac{1}{B}$. Also, $\frac{E}{n}<\frac{D}{m}$ is guaranteed by $i_m\geq i^*>j_1$ derived before. Therefore, $(8)$ and $(9)$ are satisfied, which implies $(6)$ and $(7)$. This completes the proof. 
\end{proof}

Thus we expect to see realized the equilibrium in which the researchers with the highest scientific capital pursue the problem and all other researches stay out of the race. We demonstrate this by explicit calculation for the previous example.

\begin{exmp}
Consider the two equilibria in Example 1. Let $I^*=\{Alice, Bob\}$ and $\hat{I}=\{Alice, Carol\}$. It follows that $I^*\setminus \hat{I}=\{Bob\}$ and $\hat{I}\setminus I^*=\{Carol\}$. Then we can calculate the corresponding expected payoffs in the inequality characterizing risk dominance:
\begin{align*}
&|\sum_{i\in I^*\setminus\hat{I}}U_i(I^*,h_i,h_{I^*-i})\times\sum_{j\in \hat{I}\setminus I^*}U_j(I^*\cup \hat{I},h_j,h_{I^*\cup \hat{I}-j})|=|1\times (-0.84)|\\
\geq &|\sum_{j\in \hat{I}\setminus I^*}U_j(\hat{I},h_j,h_{\hat{I}-j})\times\sum_{i\in I^*\setminus \hat{I}}U_i(I^*\cup \hat{I},h_i,h_{I^*\cup \hat{I}-i})|=|0.29\times (-0.05)|
\end{align*}
Hence, $\{Alice, Bob\}$ is the unique risk dominant equilibrium.
\end{exmp}

\subsection{Heterogeneous outside options}

The uniqueness of the risk-dominant equilibrium does not require that all players have the same opportunity cost per unit of scientific capital per unit time.
We demonstrate this as follows. Let the per-unit opportunity cost $x_i$ be heterogeneous, in the sense that it is a function of $h_i$: $x_i=\chi(h_i)$. In other words, scientists in our model now have heterogeneous outside options.The total opportunity cost per unit time $h_i \chi(h_i)$ is still assumed to be increasing in $h_i$. The expected payoff of scientist $i$ becomes $V\frac{h_i}{h_{I'}}-h_i \chi(h_i) \frac{d}{h_{I'}}-F$. To make the problem non-trivial, assume that $V-\chi(h_1)\,d-F>0$.
We prove the following proposition.

\begin{prop}
Let $I^*=\{1,\cdots,i^*\}$ where $i^*=\max\{i:V\frac{h_i}{\sum_{j=1}^i h_j}-h_i\,\chi(h_i)\frac{d}{\sum_{j=1}^i h_j}-F\geq 0\}$. $I^*$ is the unique risk dominant equilibrium if $h\,\chi(h)$ is concave in $h$ and $\frac{h_{i^*-1}\,\chi(h_{i^*-1})-h_{i^*}\,\chi(h_{i^*})}{h_{i^*-1}-h_{i^*}}<\frac{V}{d}$.
\end{prop}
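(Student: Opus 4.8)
The plan is to follow the proof of the previous proposition (the homogeneous case) line by line, carrying the cost heterogeneity through. Write $\phi(h)=h\,\chi(h)$ for the total per-unit-time opportunity cost and $g(h)=Vh-d\,\phi(h)$, so that a participant $i$ in a group $I'$ earns $U_i(I')=g(h_i)/h_{I'}-F$; the rest of the argument about the game goes through almost verbatim once we know how $g$ behaves on the realized capital levels. Two facts will do this work. First, since $\phi$ is nonnegative and concave, $\chi(h)=\phi(h)/h$ is non-increasing, so $g(h)/h=V-d\,\chi(h)$ is non-decreasing in $h$; hence on any set of capital values where $g$ is positive, $g$ itself --- a product of the positive non-decreasing factor $g(h)/h$ and the positive increasing factor $h$ --- is non-decreasing. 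Second, concavity of $\phi$ makes the secant slope $\bigl(\phi(h_{i^*-1})-\phi(h_{i^*})\bigr)/\bigl(h_{i^*-1}-h_{i^*}\bigr)$ the \emph{largest} secant slope between consecutive members of $I^*$, so the hypothesis that it lies below $V/d$ is exactly what forces $g$ to be increasing across all of $h_1,\dots,h_{i^*}$.

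First I would check that $I^*$ is an equilibrium. Non-emptiness is immediate from $V-d\,\chi(h_1)-F>0$. For $i\in I^*$: the definition of $i^*$ gives $g(h_{i^*})\ge F\sum_{j\le i^*}h_j>0$, so in particular $\chi(h_{i^*})<V/d$; together with the secant hypothesis this makes $g$ increasing on $h_1,\dots,h_{i^*}$, so $U_i(I^*)\ge U_{i^*}(I^*)\ge 0$. For $j\notin I^*$ I would split on the sign of $g(h_j)$: if $g(h_j)\le 0$ then $U_j(I^*\cup\{j\})<0$ trivially; if $g(h_j)>0$ then $g(h)/h>0$ on $[h_j,h_{i^*+1}]$, so $h\mapsto g(h)/(h_{I^*}+h)$ is non-decreasing there, whence $U_j(I^*\cup\{j\})\le U_{i^*+1}(\{1,\dots,i^*+1\})<0$ by the definition of $i^*$.

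Next, take any other equilibrium $\hat I$. As in the homogeneous proof I would first establish the structural facts: $I^*\setminus\hat I$ and $\hat I\setminus I^*$ are nonempty (from the equilibrium inequalities for $\hat I$ together with the bounds just derived); every member of $I^*\setminus\hat I$ has index $\le i^*$ and every member of $\hat I\setminus I^*$ has index $>i^*$, so any member of the former has strictly more capital than any member of the latter; and $m:=|I^*\setminus\hat I|\le n:=|\hat I\setminus I^*|$. A refinement I expect to use repeatedly, obtained from the sign dichotomy above, is that $g(h_j)<g(h_{i^*})\le g(h_i)$ for all $i\in I^*\setminus\hat I$ and $j\in\hat I\setminus I^*$. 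I would then split into the cases $h_{I^*}\ge h_{\hat I}$ and $h_{I^*}<h_{\hat I}$ and in each reduce the risk-dominance inequality to its two factor inequalities exactly as in the homogeneous proof --- in the first case by decomposing each side into the contribution of $I^*\cap\hat I$ and of the symmetric difference, in the second by passing to per-capita averages over $I^*\setminus\hat I$ and $\hat I\setminus I^*$ and combining $1/h_{I^*}>1/h_{\hat I}$ with the refined capital-gap fact.

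The step I expect to be the real obstacle is the analogue of the clean identity $\sum_{i\in I'}U_i(I')=V-dx-|I'|F$ used in the homogeneous proof. With heterogeneous costs this becomes $\sum_{i\in I'}U_i(I')=V-d\,\bar\chi_{I'}-|I'|F$, where $\bar\chi_{I'}=\sum_{i\in I'}\phi(h_i)/h_{I'}$ is a \emph{capital-weighted} average of $\chi$. The total-payoff comparisons between $I^*$ and $\hat I$ (and between the common part $I^*\cap\hat I$ evaluated under the two profiles) therefore hinge on controlling how $\bar\chi$ changes when the high-capital participants $I^*\setminus\hat I$ are swapped for the lower-capital participants $\hat I\setminus I^*$: one must show this swap does not lower $\bar\chi$ by more than the accompanying change in $|I'|F$ can absorb. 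This is precisely what concavity of $\phi$ and the secant bound at the margin $i^*$ are there to guarantee --- concavity pins down the direction of the change, and the secant bound keeps $g(h)/h$ positive over the whole participating set so that $g$ stays monotone where it is being compared. Once this bookkeeping is in place I expect the two case analyses, the per-capita comparison, and the edge cases ($i^*=1$, $I^*\cap\hat I=\emptyset$, borderline ties handled by the stated tie-breaking convention) to be routine.
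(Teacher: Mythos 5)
Your plan is essentially the paper's own proof: the same preliminary verification that $I^*$ is an equilibrium, the same two-case split on $h_{I^*}\ge h_{\hat I}$ versus $h_{I^*}<h_{\hat I}$, the same decomposition into intersection and symmetric difference in the first case and per-capita averages in the second, and the same reliance on concavity of $h\,\chi(h)$ to order the capital-weighted averages of $\chi$ (the paper's $a<b$ and $a'<b'$). The step you flag as the real obstacle --- showing the swap of $I^*\setminus\hat I$ for $\hat I\setminus I^*$ does not move the weighted average $\bar\chi$ the wrong way --- is exactly the step the paper settles with a one-line appeal to concavity, so you are not missing anything the published argument supplies; your explicit monotonicity bookkeeping for $g(h)=Vh-d\,h\chi(h)$ is, if anything, more careful than the paper's treatment of the equilibrium check and of the ordering $g(h_j)<g(h_{i^*})\le g(h_i)$.
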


\begin{proof}
When $h\,\chi(h)$ is concave in $h$ and $\frac{h_{i^*-1}\,\chi(h_{i^*-1})-h_{i^*}\,\chi(h_{i^*})}{h_{i^*-1}-h_{i^*}}<\frac{V}{d}$, the expected payoff to scientist $i\in I^*$ is decreasing in $i$. Hence $I^*$ is indeed an equilibrium.

To prove that $I^*$ is the unique risk dominant equilibrium, we adopt the argument in the proof of Proposition 2 using the same notation. Consider any other equilibrium $\hat{I}$. It still follows that $i_m\geq i^*>j_1$ and $m\leq n$. As before, we discuss two cases: $A\geq B$ and $A<B$.

When $A\geq B$: we aim to prove the same inequalities $(1)$ and $(2)$. To prove $(1)$, let $a=\frac{\sum_{i\in I^*}h_i\,\chi(h_i)}{A}$ and $b=\frac{\sum_{j\in\hat{I}}h_j\,\chi(h_j)}{B}$. From the assumption that $h\,\chi(h)$ is concave, we know that $a<b$. We then have $\sum_{i\in I^*}h_i\,\chi(h_i)=a\,A$ and $\sum_{j\in \hat{I}}h_j\,\chi(h_j)=b\,B$. The total expected payoffs in the two equilibria are given by
\begin{align*}
\sum_{i\in I^*}U_i(I^*,h_i,h_{I^*-i})&=(V\frac{A}{d}-\sum_{i\in I^*}h_i\,\chi(h_i))\frac{d}{A}-(m+l)F=V-a\,d-(m+l)F\\
\sum_{j\in \hat{I}}U_j(\hat{I},h_j,h_{\hat{I}-j})&=(V\frac{B}{d}-\sum_{j\in\hat{I}}h_j\,\chi(h_j))\frac{d}{B}-(n+l)F=V-b\,d-(n+l)F.
\end{align*}
From $m\leq n$, we have
\begin{align}
\sum_{i\in I^*}U_i(I^*,h_i,h_{I^*-i})> \sum_{j\in \hat{I}}U_j(\hat{I},h_j,h_{\hat{I}-j}).
\end{align}
Now, notice that
\begin{align*}
\sum_{k\in I^*\cap\hat{I}}U_k(I^*,h_k,h_{I^*-k})&=(V\frac{\sum_{k\in I^*\cap\hat{I}}h_k}{d}-\sum_{k\in I^*\cap\hat{I}}h_k\,\chi(h_k))\frac{d}{A}-l\,F\\
\sum_{k\in I^*\cap\hat{I}}U_k(\hat{I},h_k,h_{\hat{I}-k})&=(V\frac{\sum_{k\in I^*\cap\hat{I}}h_k}{d}-\sum_{k\in I^*\cap\hat{I}}h_k\,\chi(h_k))\frac{d}{B}-l\,F.
\end{align*}
From $A\geq B$, we have
\begin{align}
\sum_{k\in I^*\cap\hat{I}}U_k(I^*,h_k,h_{I^*-k})\leq \sum_{k\in I^*\cap\hat{I}}U_k(\hat{I},h_k,h_{\hat{I}-k}).
\end{align}
Combining $(10)$ and $(11)$, we have the desired inequality $(1)$.

To prove $(2)$, we write the terms on both sides as
\begin{align*}
\sum_{i\in I^*}U_i(I^*\cup \hat{I},h_i,h_{I^*\cup \hat{I}-i})&=(V\frac{A}{d}-a\,A)\frac{d}{C}-(m+l)F\\
\sum_{j\in \hat{I}}U_j(I^*\cup \hat{I},h_j,h_{I^*\cup \hat{I}-j})&=(V\frac{B}{d}-b\,B)\frac{d}{C}-(n+l)F.
\end{align*}
Following a similar argument to above, we have
\begin{align}
\sum_{i\in I^*}U_i(I^*\cup \hat{I},h_i,h_{I^*\cup \hat{I}-i})> \sum_{j\in \hat{I}}U_j(I^*\cup \hat{I},h_j,h_{I^*\cup \hat{I}-j}).
\end{align}
Subtracting the term on either side of $(5)$ from the total expected payoff in $I^*\cup \hat{I}$ (which is not an equilibrium), we have the desired inequality $(2)$.

When $A<B$: to show that $I^*$ is the unique risk dominant equilibrium, it suffices to show that
\begin{align}
\frac{\sum_{i\in I^*\setminus\hat{I}}U_i(I^*,h_i,h_{I^*-i})}{m}&>\frac{\sum_{j\in \hat{I}\setminus I^*}U_j(\hat{I},h_j,h_{\hat{I}-j})}{n}\\
\frac{\sum_{j\in \hat{I}\setminus I^*}U_j(I^*\cup \hat{I},h_j,h_{I^*\cup \hat{I}-j})}{n}&<\frac{\sum_{i\in I^*\setminus \hat{I}}U_i(I^*\cup \hat{I},h_i,h_{I^*\cup \hat{I}-i})}{m}.
\end{align}
Let $a'=\frac{\sum_{i\in I^*\setminus\hat{I}}h_i\,\chi(h_i)}{D}$ and $b'=\frac{\sum_{j\in \hat{I}\setminus I^*}h_j\,\chi(h_j)}{E}$. From the assumption that $h\,\chi(h)$ is concave, we know that $a'<b'$. We then have $\sum_{i\in I^*\setminus\hat{I}}h_i\,\chi(h_i)=a'D$ and $\sum_{j\in \hat{I}\setminus I^*}h_j\,\chi(h_j)=b'E$. The above inequalities are equivalent to
\begin{align}
\frac{D}{A}\frac{V-a'd}{m}&>\frac{E}{B}\frac{V-b'd}{n}\\
\frac{E}{C}\frac{V-b'd}{n}&<\frac{D}{C}\frac{V-a'd}{m},
\end{align}
Since $A<B$, $\frac{D}{m}>\frac{E}{n}$ and $a'<b'$, $(15)$ and $(16)$ are satisfied, which then implies $(13)$ and $(14)$. This completes the proof.
\end{proof}

The second condition in the proposition ensures that scientists with greater scientific capital $h_i$ obtain greater expected payoffs despite increasing opportunity cost per unit time. The existence of a unique risk dominant equilibrium for more general opportunity costs demonstrates the robustness of the basic model.

\subsection{Publishing Partial Results}
One of the defining characteristics of academic science is the expectation that scientists will share information \citep{partha1994toward}. Instead of securing (temporary) property rights over the exploitation of their discovery, as in a patenting regime, scientists compete for the recognition of their peers \citep{merton1973sociology}. In order to secure that recognition, scientists must share their discoveries. As a natural consequences of this Open Science norm, scientists grab for recognition early and often, publishing incremental results that fall well short of solving a given problem. Here we describe the conditions under which publication of partial results should obtain.

We assume that the problem $\mathcal{P}$ can be broken down into $M$ stages; this is sometimes called a treasure-hunt game \citep{banerjee2014re}. Each stage $m\in\{1,\cdots,M\}$ has value $V_m$ and difficulty $d_m$. Stage $m+1$ cannot be attacked until stage $m$ has been solved. We assume that each of these stages constitutes a publishable partial result.

At time $0$, all the scientists in $I$ simultaneously decide whether to participate in stage $1$. Scientists can participate in a later stage even without participation in an earlier stage. When scientist $i$ decides to participate in stage $m$, she pays a fixed cost $F_m$ and an opportunity cost of $h_i \, x$ per unit time\footnote{It may seem unrealistic that we do not require scientists to pay the fixed cost for earlier stages if they do not participate in those stages. We argue that, in most cases, the size of the fixed cost is driven by scientists' \emph{uncertainty} about the appropriate line of attack; they must prepare to battle on many fronts. Once the way forward is known, the fixed costs will be rapidly driven down by standardization and routinization, as the requisite technologies are black-boxed \citep{latour1987science}.}. A scientist is not required to participate in a subsequent stage $m+1$ after participating in stage $m$, nor will it always be optimal for her to do so. 

When a scientist solves a stage $m$, she faces an interesting dilemma. She can publish the solution immediately, guaranteeing that she receives credit for it but allowing her competitors to catch up.  Alternatively, she can keep her results to herself and begin working on the next stage, gambling the credit that she could have earned for the first stage in order to improve her chances of winning the race to solve the next stage.  Formally, when a solution to $m$ is published, every scientist in $I$ knows the solution (whether or not they participated in the stage). They then decide simultaneously whether to participate in stage $m+1$. If a scientist solves stage $m+1$ but chooses not to publish, she can participate in stage $m+2$ earlier than others. The value of the stage $V_m$ is awarded to the first scientist to \emph{publish} the solution, irrespective of who was first to \emph{solve} it.

We are interested here in the institution of Open Science, characterized by strategy profiles that lead to immediate publication of any partial result. Given such strategy profiles, individual strategies that abandon a stage mid-stream are never optimal and every participating scientist of every stage will always publish the solution with no delay. We call such a strategy profile a \textit{public sharing strategy profile (PSSP)}. When a PSSP constitutes an equilibrium (in the perfect Bayesian sense), we denote it as a \textit{public sharing equilibrium (PSE)} \citep{banerjee2014re}.

For stage $m$, let $I_m^*=\{1,\cdots,i_m^*\}$ where $i_m^*=\max\{i:V\frac{h_i}{\sum_{j=1}^i h_j}-h_i\,x\frac{d_m}{\sum_{j=1}^i h_j}-F_m\geq 0\}$. Let $h_{I_m^*-i}=\sum_{j\in I_m^*,j\neq i}h_j$. As before, scientists are labeled in descending order of scientific capital $h_i$: $h_1\geq h_2\geq\cdots\geq h_N$.

We prove the following proposition:

\begin{prop}
Suppose that the following condition is satisfied for every pair of stages $(m,m')$ such that $m$ precedes $m'$:
\begin{align*}
\text{For any $i$, }\frac{V_m h_{I^*_m-i}}{d_m}-\frac{h_i}{d_{m'}}\times\frac{V_{m'}h_{I^*_{m'}-i}+xd_{m'}h_i}{h_i+h_{I^*_{m'}-i}}\geq 0.
\end{align*}
Then the following strategy profile, denoted $\sigma^*$, constitutes a PSE: the scientists play the PSSP in which scientist $i$ always participates in stage $m$ if and only if $i\in I^*_m$; never abandons a stage mid-stream; and always publishes the solution to a stage without delay. In every stage, the participation decisions in $\sigma^*$ are risk dominant.
\end{prop}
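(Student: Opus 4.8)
The plan is to verify that $\sigma^*$ is a perfect Bayesian equilibrium by the one-shot deviation principle, which is available here because payoffs are bounded (the total prize $\sum_m V_m$ is finite, and any strategy that keeps paying the opportunity flow $h_ix$ with no prospect of reward is dominated) and every stage is solved in finite time almost surely. So it suffices to show that at every information set — on or off path — the action prescribed by $\sigma^*$ is optimal given that all scientists, the deviator included, revert to $\sigma^*$ afterward; off-path beliefs are specified in the natural way (a scientist who has seen no publication of the currently open stage believes it unsolved), which agrees with Bayes' rule on path and is unrestricted off it. Two kinds of decisions arise: \emph{entry} decisions at the opening of each stage, and \emph{disclosure} decisions faced by a scientist who has just solved some stage. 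As in the basic game, ``abandon a stage mid-stream'' and ``sit on a solved stage while doing nothing'' are strictly dominated by memorylessness of the exponential clock, so only entry and \emph{productive} concealment need attention.

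For the entry decisions, the key observation is that once concealment has been shown unprofitable (next step), a scientist who enters stage $m$ optimally discloses immediately, so her total payoff splits as the stage-$m$ quantity $V_m\frac{h_i}{h_{I'_m}}-h_ix\frac{d_m}{h_{I'_m}}-F_m$ (with $I'_m$ the set of stage-$m$ entrants) plus a term independent of her stage-$m$ entry choice — sequential stages with immediate disclosure and no discounting do not couple through entry. Hence the stage-$m$ entry subgame (given $\sigma^*$ elsewhere) is strategically identical to the basic game with parameters $(V_m,d_m,F_m)$, and applying Proposition~2 stagewise (genericity of the $h_i$'s carries over; when $V_m-d_mx-F_m\le 0$ one has $I^*_m=\emptyset$ and the claim is vacuous) shows that ``participate iff $i\in I^*_m$'' is a Nash equilibrium of that subgame and is its unique risk-dominant equilibrium. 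This disposes of the entry one-shot deviations (entering when $i\notin I^*_m$, or not entering when $i\in I^*_m$, each weakly hurts $i$ by construction of $i^*_m$) and delivers the final sentence of the proposition.

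The crux is the disclosure decisions, and this is where the displayed inequality enters. Fix a node at which scientist $i$ holds an unpublished solution to a stage $m$ and let $m'$ be the next stage she would work on; consider the one-shot deviation ``conceal, work toward $m'$, and revert to $\sigma^*$ as soon as either $i$ solves $m'$ or a rival publishes $m$.'' The decisive simplification is memorylessness: if a rival publishes $m$ first, $i$'s partial progress on $m'$ is worthless, so the only payoff-relevant contingency is that $i$ secretly solves $m'$ before being forced to reveal $m$. With the rival-reveals clock at rate $\lambda=h_{I^*_m-i}/d_m$ and the $i$-solves-$m'$ clock at rate $\mu=h_i/d_{m'}$ (independent exponentials), a direct computation of expected continuation payoffs — writing $P_{m'}=V_{m'}\frac{h_i}{h_i+h_{I^*_{m'}-i}}-h_ix\frac{d_{m'}}{h_i+h_{I^*_{m'}-i}}$, so that $V_{m'}-P_{m'}=\frac{V_{m'}h_{I^*_{m'}-i}+xd_{m'}h_i}{h_i+h_{I^*_{m'}-i}}$ — reduces (value of deviation) minus (value of ``publish now'') to $\frac{1}{\mu+\lambda}\big(\mu(V_{m'}-P_{m'})-\lambda V_m-h_ix-(\text{nonnegative fixed-cost slack})\big)$; in particular the prizes attached to stages strictly between $m$ and $m'$ cancel between the two continuations, which is why only the pair $(m,m')$ appears. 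Dropping the slack, the deviation is unprofitable whenever $\frac{V_mh_{I^*_m-i}}{d_m}\ge\frac{h_i}{d_{m'}}\cdot\frac{V_{m'}h_{I^*_{m'}-i}+xd_{m'}h_i}{h_i+h_{I^*_{m'}-i}}$ — exactly the hypothesis, read for every scientist $i$ and every pair of stages $m$ preceding $m'$. Combined with the previous step, this completes the check.

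The step I expect to be the real obstacle is not any one inequality but the careful handling of the off-path disclosure nodes that the universal quantifiers in the hypothesis are there to cover: a scientist who entered a stage despite $i\notin I^*_m$, or who holds several consecutive secret solutions and is racing a distant $m'$. One must fix what ``revert to $\sigma^*$'' means for such a scientist — does she publish solutions to stages she ``should not'' have worked on, and does her already-sunk $F_{m'}$ make her rejoin a stage-$m'$ race she would otherwise skip? — and must check that these conventions leave $\lambda$ and $\mu$ exactly as above so the expected-value collapse remains valid. The placement of the $F_{m'}$ and $h_ix$ slack terms is precisely what makes the boundary cases come out right: when $i\notin I^*_{m'}$, solving $m'$ secretly would capture its \emph{whole} prize $V_{m'}$ (not the competitive share), and there one must invoke the definition of $i^*_{m'}$ to bound $\frac{h_i(V_{m'}-xd_{m'})}{h_i+h_{I^*_{m'}}}$ against $F_{m'}\mu+h_ix$. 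A secondary, routine point is recording why the one-shot deviation principle is legitimate in this continuous-time, undiscounted setting.
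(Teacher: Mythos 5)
Your proposal is correct and follows essentially the same route as the paper: the paper likewise splits the argument into entry decisions (reduced stagewise to the basic game and settled by Proposition~2) and disclosure decisions, and settles the latter by exactly your rate comparison of $\lambda V_m$ against $\mu(V_{m'}-P_{m'})$ with the intermediate prizes cancelling, yielding the displayed condition. The only difference is organizational: where you invoke the one-shot deviation principle, the paper runs a joint induction over the finite lattice of subgames $G_{k,l}^i$ (following Banerjee et al.), comparing immediate publication against delay-by-$\tau$ strategies --- the same computation packaged so as to sidestep justifying one-shot deviations in continuous, undiscounted time.
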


\begin{proof}
Suppose that every agent other than $i$ chooses the PSSP strategy described above: agent $j$ participates if $j \in I^*_m$, never abandons a stage, and always publishes the solution to a stage without delay. First, we show that given any participation decision of $i$, publishing instantly is a best response. The proof of this part, which is given below, generalizes the argument by \cite{banerjee2014re} to an environment with the opportunity cost of time. We adapt the language of \cite{banerjee2014re} to our setting, but follow their argument and notation closely.

For notational convenience, we label the $m$ stages in reverse order, such that the last stage is labeled $1$ and the first is labeled $m$. Let $G_{k,l}^i$, $0\leq k\leq l\leq m$, be the subgame where agent $i$ is working on stage $k$, and all other agents are working on stage $l$. Note that because of the property of the exponential distribution, it does not matter how much time has elapsed for each agent at her current stage. Consider two subgames $G_{k,l}^i$ and $G_{p,q}^i$; if $p<k$, or $p=k$ and $q<l$,

denote the subgame $G_{p,q}^i$ as \emph{smaller} than $G_{k,l}^i$. We let $R(G_{k,l}^i)$ denote the expected payoff earned by agent $i$ in subgame $G_{k,l}^i$ (this corresponds to the ``expected reward" in \cite{banerjee2014re}). Finally, recall that in a subgame $G_{k,l}^i$ with $k<l$, agent $i$ is already at stage $k$, while other participating agents are ``behind" in the race and must work on an earlier stage $l$. We label the stages $\{l,l-1,\cdots,k+1\}$ (i.e., the stages between $l$ and $k$, including $l$) as the \emph{captive stages} of agent $i$: these are the stages that $i$ has solved and other agents have not. We prove the following claim: under the condition specified above, PSSP is a best response for agent $i$ in every subgame $G_{k,l}^i$, $0\leq k\leq l\leq m$. 

Like \cite{banerjee2014re}, we prove the claim by a joint induction on $k$ and $l$. The simplest case is one in which $i$ has solved every stage, i.e., the subgame $G_{0,l}^i$ where $i$ is already at stage $0$ and her competitors are at some stage $l$ equal to or earlier than $0$. It is obvious that $i$'s best response is to publish immediately, following the public sharing strategy profile (PSSP). In any other subgame, $i$ has solved some but not all of the stages and is either ahead of her competitors ($k<l$, Case I) or tied with them ($k=l$, Case II). We analyze each case in turn.

Case I: $1\leq k<l$. Consider a subgame $G_{k,l}^i$ and denote its \emph{starting time} as $t=0$ (in other words, at $t = 0$, $i$ knows the solution to all stages up to $k$ and her participating competitors know the solution to all stages up to $l$). Assume as our induction hypothesis that PSSP is a best response for agent $i$ in every smaller game, i.e., for every subgame $G_{p,q}^i$ in which $p<k$, or $p=k$ and $q<l$. Consider the case in which agent $i$ instantaneously publishes the solutions to her captive stages at $t=0$, i.e., in which she follows the PSSP. Upon publication of these solutions, $G_{k,l}^i$ reduces to the subgame in which all participating scientists start at the same stage $k$, i.e., to the smaller subgame $G_{k,k}^i$. By assumption, PSSP is the best response for this smaller subgame. Now consider the case in which $i$ deviates from PSSP, keeping her solutions secret for a finite amount of time. After an interval $\tau_1$ we transition to a different subgame for one of two reasons: either agent $i$ solves stage $k$ (so we transition to the smaller subgame $G_{k-1,l}^i$, in which $i$ will play PSSP as her best response by assumption) or one of her competitors solves stage $l$ and discloses the solution (so we transition to the smaller subgame $G_{k,l-1}^i$, in which $i$ will play PSSP as her best response by assumption). Recall that playing PSSP for $G_{k,l}^i$ would require immediate publication of all captive stages as soon as the game begins. We hence consider alternative strategies with the following structure: $i$ commits to publishing the solution to captive stages after some fixed delay $\tau>0$, or publishes the solutions at $\tau_1$, whichever comes first. We thus need to show that $i$ maximizes her expected payoff in the subgame $G_{k,l}^i$ by choosing $\tau = 0$, i.e., by playing the PSSP. We denote the expected payoff to agent $i$ of delay $\tau$ in subgame $G_{k,l}^i$ as $U_{k,l}^i(\tau)$. The expected payoff to the PSSP (i.e., $\tau = 0$) is 
\begin{align*}
U_{k,l}^i(0)=\sum_{u=k+1}^l V_u + R(G_{k,k}^i)=V_l+\Delta+\frac{h_i}{h_i+h_{I^*_k-i}}(V_k-dx)+R(G_{k-1,k-1}^i),
\end{align*}
where we define $\Delta=\sum_{j=k+1}^{l-1}V_j$. This corresponds to the payoffs for all stages $l$ to $k+1$ (denoted $\Delta$) plus the expected payoff in subgame $G_{k,k}^i$. On the other hand for $\tau>0$, if there were no opportunity cost of time, delaying publication until $t=\tau$ yields an expected payoff of
\begin{align*}
&Prob(\tau_1\geq\tau)(\sum_{u=k+1}^l V_u+R(G_{k,k}^i))\\
&+Prob(\tau_1<\tau)(\frac{\frac{h_i}{d_k}}{\frac{h_i}{d_k}+\frac{h_{I^*_l-i}}{d_l}}R(G_{k-1,l}^i)+\frac{\frac{h_{I^*_l-i}}{d_l}}{\frac{h_i}{d_k}+\frac{h_{I^*_l-i}}{d_l}}R(G_{k,l-1}^i)),
\end{align*}
where $\sum_{u=k+1}^l V_u+R(G_{k,k}^i)$ is $i$'s continuation payoff in the event that no competitor solves stage $l$ and $i$ does not solve stage $k$ by time $t=\tau$, and $\frac{\frac{h_i}{d_k}}{\frac{h_i}{d_k}+\frac{h_{I^*_l-i}}{d_l}}R(G_{k-1,l}^i)+\frac{\frac{h_{I^*_l-i}}{d_l}}{\frac{h_i}{d_k}+\frac{h_{I^*_l-i}}{d_l}}R(G_{k,l-1}^i)$ is $i$'s continuation payoff in the event that $i$ solves stage $k$ or some competitor solves stage $l$  by $t=\tau$. With the opportunity cost of time, $i$'s expected payoff must be lower, and hence we have

\begin{align*}
U_{k,l}^i(\tau)\leq &Prob(\tau_1\geq\tau)(\sum_{u=k+1}^l V_u+R(G_{k,k}^i))\\
&+Prob(\tau_1<\tau)(\frac{\frac{h_i}{d_k}}{\frac{h_i}{d_k}+\frac{h_{I^*_l-i}}{d_l}}R(G_{k-1,l}^i)+\frac{\frac{h_{I^*_l-i}}{d_l}}{\frac{h_i}{d_k}+\frac{h_{I^*_l-i}}{d_l}}R(G_{k,l-1}^i)).
\end{align*}

We can rewrite the bounds on the expected payoff $U_{k,l}^i(\tau)$ in terms similar to the expected payoff of the PSSP, $U_{k,l}^i(0)$:
\begin{align*}
U_{k,l}^i(\tau)\leq &Prob(\tau_1\geq\tau)(V_l+\Delta+\frac{h_i}{h_i+h_{I^*_k-i}}(V_k-dx)+R(G_{k-1,k-1}^i))\\
&+Prob(\tau_1<\tau)\frac{\frac{h_i}{d_k}}{\frac{h_i}{d_k}+\frac{h_{I^*_l-i}}{d_l}}(V_l+\Delta+V_k+R(G_{k-1,k-1}^i))\\
&+Prob(\tau_1<\tau)\frac{\frac{h_{I^*_l-i}}{d_l}}{\frac{h_i}{d_k}+\frac{h_{I^*_l-i}}{d_l}}(\Delta+\frac{h_i}{h_i+h_{I^*_k-i}}(V_k-dx)+R(G_{k-1,k-1}^i)).
\end{align*}
To compare this payoff with the expected payoff of publishing instantaneously, we subtract the above expression from $U_{k,l}^i(0)$ and have
\begin{align*}
U_{k,l}^i(0)-U_{k,l}^i(\tau)\geq \frac{Prob(\tau_1<\tau)}{\frac{h_i}{d_k}+\frac{h_{I^*_l-i}}{d_l}}(\frac{V_l h_{I^*_l-i}}{d_l}-\frac{h_i}{d_{k}}\frac{V_{k}h_{I^*_{k}-i}+xd_{k}h_i}{h_i+h_{I^*_{k}-i}}).
\end{align*}
From the condition in the proposition, we have $U_{k,l}^i(0)-U_{k,l}^i(\tau)\geq 0$ for every $\tau>0$. Thus setting $\tau=0$ maximizes the expected payoff.

Case II: $k=l$. Assume as our induction hypothesis that PSSP is a best response for agent $i$ in every smaller game. In this case, the game will transition to either $G_{k-1,k}^i$ (if $i$ solves stage $k$) or to $G_{k-1,k-1}^i$ (if some competitor solves stage $l$). In each case, PSSP is a best response for $i$ from the induction hypothesis. This completes the induction and hence proves the first part.

Next, we show that the \emph{participation} decision for $i$ as described in the proposition is indeed optimal. Given instantaneous publication by each agent, the whole game can be treated as $M$ independent games, one for each stage, with an identical payoff structure to the \emph{basic game} defined in Section 2. By Proposition 2, we know that for every such game, $I^*_m$ is an equilibrium for participation. Hence, given that every other scientist $j$ participates in stage $m$ if and only if $j\in I^*_m$, it is optimal for $i$ to use the analogous strategy: participate in stage $m$ if and only if $i\in I^*_m$. Therefore, $\sigma^*$ is a PSE. Finally, Proposition 2 guarantees that the participation decisions in each stage are risk dominant. This completes the proof.

\end{proof}

The basic intuition behind the PSE is as follows: A scientist will publish her solution to any stage immediately when (1) the current solved stage is relatively valuable or easy; (2) there are many competitors and/or competitors with high scientific capital; (3) she has low scientific capital herself or (4) the opportunity cost of time is low. Cases (1) - (3) refer to the situation where a bird in the hand is better than two in the bush: the scientist is incented to publish immediately and secure the reward before strong competitors catch up, which is a relatively likely event. In (4), both the expected payoff of publishing immediately and the expected payoff of holding off increase in the opportunity cost of time, but the increment of the former is larger.

\begin{exmp}
A species of Mycoplasma bacterium previously endemic in migratory songbirds begins to infect domestic ducks and geese.  Two microbiologists, Daniela and Elton, are poised to figure out how this host shift has occurred. The research process involves two stages: first, large-scale whole genome sequencing to determine what mutation or mutations made the shift possible; and second, the molecular biology necessary to work out the biochemical mechanism by which the mutations act. Daniela has scientific capital $h_1=10$ and Elton has scientific capital $h_2=6$. The two stages offer rewards $V_1=60$ and $V_2=30$ respectively to the scientist who first publishes the solution, and both have the same difficulty level $d=15$. The opportunity cost per unit scientific capital per unit time for both scientists is $x=1$, while the fixed costs for the two stages are $F_1=1$ and $F_2=5$. We first obtain $I^*_1=I^*_2=\{Daniela,Elton\}$. It is easy to check that the condition in Proposition 4 is satisfied for both scientists. Hence, the only PSE of this game is that both scientists participate in each stage and upon solving either stage each would immediately publish the results.

But now suppose that fixed costs are slightly higher for the second stage: $F_2=6$. Again both Daniela and Elton will opt to pursue the initial (and highly rewarding) genomic sequencing stage. Each will publish the results of this stage immediately. But only Daniela will pursue the mechanistic molecular biology in the second stage. Thus for this scenario once again $I^*_1=\{Daniela,Elton\}$, but now $I^*_2=\{Daniela\}$. It is straightforward to check that the condition in Proposition 4 is satisfied with the new $I^*_1,I^*_2$. Hence, the only PSE of this game is that both Daniela and Elton participate in stage 1, but only Daniela participates in stage 2.
\end{exmp}

The derivation of the sufficient condition for PSE is similar in part to the argument by \cite{banerjee2014re}, but our result differs from theirs in a fundamental aspect. In our model, time spent on each stage is costly and scientists are free to opt in or out for each stage, while in the Banerjee et al. model the researchers must continue to participate in each stage until the final one is solved. Consequently, the size and membership of the set of participating scientists may fluctuate across stages in our PSE $\sigma^*$: more scientists will choose to work on easier stages, while only the best few will stay for the harder stages.
The speed at which different stages are solved may be affected differentially as well.  The harder the stage, the larger the number of scientists who opt out and thus the longer its solution is delayed compared to the case of compulsory participation. From a social welfare point of view, this raises a further question concerning the design of institutions: should the institutional arrangement reflected in this multi-stage framework and commonly known as ``Open Science'' always be promoted, or is it sometimes more efficient to implement a scheme where scientists must participate in either every stage or none at all? We will answer this question in the next section.

\subsection{Social Welfare and Efficiency}

We consider social welfare as follows: once the problem $\mathcal{P}$ is solved, it generates a constant flow of benefit to society of $\hat{V}>0$ per unit time\footnote{
\citet{partha1994toward} assert that society values ``additions to the stock of public knowledge'' (which they dub Science) as distinct from ``adding to the stream of rents that may be derived from possession of...private knowledge'' (which they dub Technology). Based on this analytic distinction, scholars often treat time differently in models of Science versus models of Technology. According to this view, both society and individual technologists care about the timing of benefit streams flowing from Technology; hence time is often treated explicitly in these models. But in the case of Science, individual discoveries are modeled as benefit shocks and society is modeled as indifferent to the time required to add to the stock of knowledge. On this assumption, \emph{scientists} may care a great deal about time-to-discovery, but this time sensitivity isn't typically modeled. We will argue that this analytic distinction and the associated modeling choices distort important aspects of the social function of science. Most results in basic science generate ongoing benefit flows, and in many cases society cares deeply about the time-to-solution for a problem in basic science. Thus we account directly for time in the present analysis. }. These benefits could be largely cognitive (e.g., knowing that gravitational waves exist) or eminently practical (e.g., reducing deaths from AIDS). Society discounts benefit and cost by rate $r$, that is, the present value of receiving a benefit $V$ at time $t$ in the future is equal to $V \,e^{-r t}$. 

Finally, for publicly funded research, society ultimately bears some of the fixed costs and therefore there is a social fixed cost $\hat{F}$ per participating scientist. Because scientists respond to individual incentives generated by the credit allocation mechanisms within the field, we do not assume that scientists' benefits and costs are precisely aligned with those of society \citep{haldane1926should}. We leave open the possibilities that (i) scientists may gain prestige, promotions, and profit by working on problems that are of little value to society and that (ii) problems of substantial importance to society may be undervalued within academia. We do assume, however, that once fixed costs are paid, the focal problem merits solving from society's perspective:  $\frac{\hat{V}}{r}-\hat{x}\,d>0$.  

We can now compute the expected social welfare when the participating group of scientists is $I'\in I$. The flow of benefits to society is equivalent to an instantaneous benefit of $\frac{\hat{V}}{r}$ at the moment the problem is solved. The expected social welfare is given by
\begin{align*}
W(I')=\frac{\hat{V}}{r}\frac{\sum_{j\in I'}h_j}{\sum_{j\in I'}h_j+dr}-\frac{\hat{x}d\sum_{j\in I'}h_j}{\sum_{j\in I'}h_j+dr}-|I'|\hat{F}
\end{align*}
where $|I'|$ is the cardinality of $I'$. Note that this expression can only achieve a positive value if $\frac{\hat{V}}{r}-\hat{x}\,d>0$. In this expression, the first term is the discounted expected social benefit, the second term is the discounted expected social opportunity cost, and the third term is the sum of the fixed costs. By observing that $\frac{\sum_{j\in I'}h_j}{\sum_{j\in I'}h_j+dr}$ is increasing in $\sum_{j\in I'}h_j$, we know that the socially optimal participating group, conditional on the group size, is the group consisting of scientists with the highest scientific capital. Hence, letting $H_i=\sum_{j=1}^i h_j$, it is more useful to focus on the expected social welfare from the group $\{1,\cdots,i\}$:
\begin{align*}
W(i)=\frac{\hat{V}}{r}\frac{H_i}{H_i+dr}-\frac{\hat{x}dH_i}{H_i+dr}-i\hat{F}.
\end{align*}

The value of social welfare goes to infinity as $r$ goes to $0$. However, the difference between social welfare for two different participating groups has a finite limit. For instance, consider another participating group of scientists $\{1,\cdots,i'\}$ with $H_{i'}$ units of scientific capital in total; the difference is given by
\begin{align*}
(\frac{\hat{V}}{r}-\hat{x}d)\frac{(H_i-H_{i'})dr}{(H_i+dr)(H_{i'}+dr)}-(i-i')\hat{F}.
\end{align*}
The limit as $r\rightarrow 0$ is equal to $\hat{V}d(H_i-H_{i'})-(i-i')\hat{F}$.

Let $I^e$ denote the socially optimal participating group, i.e., the group that maximizes the social welfare. We prove the following proposition:

\begin{prop}
The socially optimal participating group is unique: $I^e=\{1,\cdots,i^e\}$ where $i^e=\max\{i:W(i)-W(i-1)\geq 0\}$.
\end{prop}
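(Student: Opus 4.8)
The plan is to reduce the problem to a one–dimensional optimization over group size and then exploit a discrete concavity property of $W(i)$.

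First I would invoke the reduction already recorded just before the statement. Since $W(I')=\left(\frac{\hat V}{r}-\hat x d\right)\frac{\sum_{j\in I'}h_j}{\sum_{j\in I'}h_j+dr}-|I'|\hat F$ and $\frac{S}{S+dr}$ is increasing in $S$, among all participating groups of a fixed cardinality $i$ the welfare-maximizing one is $\{1,\dots,i\}$ (the $i$ scientists with the highest capital). Hence the socially optimal group is necessarily of the nested form $\{1,\dots,i\}$ for some $i$, and it suffices to show that $W(i)$, viewed as a function of $i\in\{0,1,\dots,N\}$ (with $W(0)=0$, $H_0=0$), is maximized uniquely at $i^e=\max\{i:W(i)-W(i-1)\ge 0\}$.

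Second I would compute the first difference explicitly. Writing $\frac{H}{H+dr}=1-\frac{dr}{H+dr}$ gives
\begin{align*}
W(i)-W(i-1)=\left(\frac{\hat V}{r}-\hat x d\right)\frac{dr\,h_i}{(H_{i-1}+dr)(H_i+dr)}-\hat F .
\end{align*}
By assumption $\frac{\hat V}{r}-\hat x d>0$; and because $h_1\ge h_2\ge\cdots\ge h_N>0$, the numerator $h_i$ is non-increasing in $i$ while the denominator $(H_{i-1}+dr)(H_i+dr)$ is strictly increasing in $i$. Therefore $W(i)-W(i-1)$ is strictly decreasing in $i$: $W$ is strictly discretely concave. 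Consequently the set $\{i:W(i)-W(i-1)\ge 0\}$ is an initial segment $\{1,\dots,i^e\}$ (empty, with $i^e=0$ and $I^e=\emptyset$, in the degenerate case $W(1)<W(0)$), so $W$ is increasing on $\{0,\dots,i^e\}$ and decreasing on $\{i^e,\dots,N\}$; hence $i^e$ attains the maximum, and strict discrete concavity makes the maximizer unique. Combined with the cardinality-reduction step, $I^e=\{1,\dots,i^e\}$ is the unique socially optimal participating group.

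The only real obstacle is the monotonicity of the first difference, and the closed form above dispatches it in one line, so the remainder is verification. The single point that needs a word of care is the boundary/tie case: one should confirm that the ``$\ge 0$'' in the definition of $i^e$ together with the paper's convention of breaking indifference in favor of participation yields a genuine unique optimizer, rather than a tie between $\{1,\dots,i^e\}$ and $\{1,\dots,i^e-1\}$ when $W(i^e)=W(i^e-1)$; generically the first differences are nonzero and no such tie arises.
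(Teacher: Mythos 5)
Your proposal is correct and follows essentially the same route as the paper: both compute the first difference $W(i)-W(i-1)=\left(\frac{\hat V}{r}-\hat x d\right)\frac{h_i\,dr}{(H_i+dr)(H_{i-1}+dr)}-\hat F$ and conclude from its monotone decrease in $i$ that the maximizer is the initial segment up to $i^e$. Your added remarks on the reduction to nested groups and on tie-breaking are consistent with the paper's genericity assumption and its convention of resolving indifference in favor of participation.
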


\begin{proof}
It suffices to show that $W(i)-W(i-1)$ is decreasing in $i$. Note that
\begin{align*}
\frac{H_i}{H_i+dr}-\frac{H_{i-1}}{H_{i-1}+dr}=\frac{(H_i-H_{i-1})dr}{(H_i+dr)(H_{i-1}+dr)}=\frac{h_idr}{(H_i+dr)(H_{i-1}+dr)}.
\end{align*}
In this expression, $h_i$ is decreasing in $i$, and both $H_i$ and $H_{i-1}$ are increasing in $i$. Hence, the value of this expression is decreasing in $i$. Now we have
\begin{align*}
W(i)-W(i-1)=(\frac{\hat{V}}{r}-\hat{x}d)(\frac{H_i}{H_i+dr}-\frac{H_{i-1}}{H_{i-1}+dr})-\hat{F},
\end{align*}
which is a linear in $\frac{H_i}{H_i+dr}-\frac{H_{i-1}}{H_{i-1}+dr}$. Hence, $W(i)-W(i-1)$ is decreasing in $i$. This completes the proof.
\end{proof}

In this setting, both under-participation and over-participation are possible in general. Rearranging the terms in $W(i)-W(i-1)$, we can see that the socially optimal participating group is characterized by the maximum $i$ such that
\begin{align*}
\frac{\frac{\hat{V}}{r}-\hat{x}d}{\hat{F}}\times\frac{h_i}{(h_i+H_{i-1}+dr)(1+\frac{H_{i-1}}{dr})}\geq 1.
\end{align*}
The first term on the left describes the attractiveness of the problem to society as a benefit-to-cost ratio; the second term is scientist $i$'s marginal importance to society, given that scientists $1,\cdots,i-1$ are already participating.

Note that the condition we obtained for determining the risk dominant equilibrium takes a similar though somewhat simpler form. The unique risk dominant equilibrium is characterized by the maximum $i$ such that
\begin{align*}
\frac{V-{x}d}{{F}}\times\frac{h_i}{H_i}\geq 1.
\end{align*}
The first term on the left is how attractive the problem is to a scientist if she were to work on it alone; again it is a benefit-to-cost ratio. The second term is scientist $i$'s marginal importance, not to society in a cooperative sense, but to the competition. It is simply evaluated as her own chance of solving the problem first. 

Those studying the norms of science often assert that when a research problem can be decomposed into stages, it is socially efficient to publish the results of each stage as soon as they are known (e.g., Merton 1957b; David 2003). \nocite{merton1957social,david2003economic} By sharing partial results, the argument goes, the entire community of interested scientists can begin working on a subsequent stage as soon as the previous stage is solved. In the context of our model, we might expect that allowing the publication of partial results will always maximize social efficiency via the PSE. But this argument ignores the fact that scientists must decide whether to participate in a given stage at all. This decision is endogenous and depends on the information-sharing strategies of the other players. When partial progress sharing is allowed,  many researchers may choose to drop out of the more difficult stages and the expected time to solution may increase, to the detriment of society. We illustrate with an example.

\begin{exmp}
Consider the second case of Example 3, in which $F_2=6$. Once both stages are solved, it will be clear how to prevent further infection among domestic fowl, and thus society will enjoy a benefit of $\hat{V}=100$ per unit time. The benefit flow begins when and only when both stages are solved, and is discounted by rate $r=0.5$. The costs for society are assumed to be identical to those for the scientists: $\hat{x}=1$, $\hat{F}_1=1$ and $\hat{F}_2=6$. The general formula to compute social welfare is

\[
\mbox{social welfare} = E[\mbox{discounted benefit flow}] - E[\mbox{discounted opportunity cost}] -\hat{F}_1-\hat{F}_2.
\]

We have shown that with publication of partial results, the unique risk dominant PSE is that Daniela participates in both stages but Elton participates only in stage 1; and that they will both publish immediately when they have a solution to either stage. The time to solution of both stages is thus the sum of two exponential random variables, with rate $\frac{d}{h_1+h_2}$ and $\frac{d}{h_1}$ respectively. Hence the social welfare is equal to $51.03$.

Without publication of partial results, if either scientist wants to participate, he or she has to pay the sum $F_1+F_2$, and will receive a reward of $V_1+V_2=90$ if the first to solve both stages. For each scientist, the time to solution is now a gamma-distributed random variable with shape $2$ and scale $\frac{d}{h_i}$. We can show that the only equilibrium in this case is that both Daniela and Elton participate in the whole problem. In this case the social welfare is equal to $55.28$. This social welfare is higher than that obtained with publication of partial results because the cost of not sharing the solution to the first stage is more than outweighed by the benefits of having both scientists participate in both stages of the problem.
\end{exmp}

Here we see that when participation in each stage becomes part of a scientist's strategic decision, the scientist weighs the expected benefit and cost for each stage and only opts in for those where she receives a positive net expected payoff. Consequently, under-participation occurs for hard stages and the overall participation level may be lower compared to the case without publication of partial results. The expected time to solution for the problem as a whole becomes longer, resulting in lower social welfare. Note the concordance with recent experimental evidence \citep{boudreau2015open} showing that regimes \emph{without} disclosure of partial results lead to higher levels of participation.

It is useful to contrast our findings with existing theories on patent races. In patent races, a firm may disclose information \emph{before} it is able to secure a patent in order to signal its commitment (as a leader) or move the finish line (as a follower). Some authors have made the claim that this behavior is always socially desirable \citep{defraja1993,bargill2003}, but others have noted that once entry and exit are part of a strategic decision, sharing may not be optimal \citep{bhattacharya1983,parchomovsky2000,gill2008}. The key element leading to suboptimal exit in the latter models is risk preference: the more risk averse competitors are, the more likely the lagging ones will exit prematurely, which slows down overall research progress. In our model, a scientist may publish a partial result in order to secure the credit, and the exit decision depends on difficulty level, scientific capital, and value; when an intermediate stage is relatively hard or its value is relatively low, a less capable scientist opts out. As a result, the complete solution to the problem is delayed on average.

\section{Conclusion}

Big new problems arise in scientific communities through several processes. First, methodological or technological breakthroughs can facilitate new avenues of investigation; for example, massive improvements in the ability to sequence ancient DNA initiated a race to determine the contribution of archaic hominins such as Neanderthals and Denisovans to the modern human gene pool \citep{sankararaman2014genomic}. Second, theoretical advances can raise empirical possibilities that challenge scientists to develop novel instrumentation to verify their predictions. For example, Einstein's theory of general relativity predicted the existence of gravitational waves; after an initial delay, physicists spent decades trying numerous approaches for detecting them before the recent success of LIGO \citep{abbott2016observation}. Finally, nature herself confronts us with novel challenges of dramatic societal importance, like the current struggle to control the spread of Zika virus and ameliorate the harm that it is causing \citep{petersen2016zika}.

Our paper uses game theory to explore how the scientific community responds to these big problems. We make two contributions to this concrete question. First, our model robustly predicts that only the top scientists---those with the most scientific capital---will participate and compete for the solution of a big new problem. Second, we challenge the conventional assertion that Open Science with immediate sharing of partial results is always socially efficient. We show that this is not generically true: if competing in any or all stages of the race is itself a strategic choice, Open Science may slow down the ultimate solution of the problem by deterring participation. One major reason that our result differs from previous work is that we simultaneously model both information sharing decisions and problem choice, in that we allow the actors in our model to opt in or opt out of a particular research program. 

In doing so, we have proposed a flexible modeling framework for studying science, within which we can analyze both individual behavior and social efficiency. The framework can be readily extended in many directions. We offer a few concrete suggestions here: (1) How do scientists choose among multiple competing big problems? (2) What happens when society values the total stock of knowledge rather than a continuing flow of benefits? (3) How would scientists' behavior change if scientific credit were endogenized, so that the value of solving a problem depends on the number of individuals also trying to solve it? (4) How do scientists' decisions to enter or exit a race change under uncertainty about its difficulty? (5) What happens when stages of a problem can be solved in parallel rather than only in sequence? (6) In what contexts can scientists be incented to sequence their work on interrelated problems to minimize the total time-to-solution? (7) How do scientists behave in multi-period settings, in which  scientific capital is accumulated across periods? Each of these extensions increase the realism of the models, with direct implications for science policy. In the long run we aspire to make policy recommendations based upon such formal analysis rather than simple \emph{ad hoc} models or, more commonly, qualitative intuitions, hopes, and prejudices.

Efforts to increase the formal rigor of science policy recommendation are timely. Federal research budgets are stagnant or contracting, while the very public replication crisis \citep{open2015estimating,camerer2016evaluating} has shaken both lay and professional confidence in the basic institutions of science. One striking positive change is the flourishing literature on the science of science. Thanks to the unprecedented availability of large-scale electronic corpora \citep{evans2011metaknowledge}, the last 15 years have witnessed an explosion of large-scales---though often descriptive---quantitative analyses of scientific output \citep{rosvall2008maps, petersen2011quantitative, wang2013quantifying, uzzi2013atypical,foster_tradition_2015}. While these analyses have radically expanded our positive understanding of the operations of science, the policy prescriptions derived from these empirical projects have limited value. Most do not persuasively nail down causality, though \cite{budish2015firms} and \cite{azoulay2013matthew} are exceptions. In general, the field lacks the formal microfoundations needed for disciplined comparative institutional analysis. And such institutional analysis is necessary if we are to avoid unintended consequences from hasty or faddish policy changes.

We aim to provide such microfoundations. As we hint above, this modeling framework should be rich enough to treat choice, outside options, time, and partial publication in one setting, mirroring the complexity of scientific institutions and decision-making. The framework should also consistently distinguish between scientists' incentives and the social benefits of science, so that we can model a wide variety of situations. Here we address the most basic problem confronting scientists: research choice. We focus on the special and socially consequential case of ``big problems." 

Understanding scientific institutions is essential, if science and technology policy is to help rather than hinder scientists' capacity to solve specific problems and produce fundamental insights with unknown but often substantial spillover benefits. The recent empirical attention lavished on science is a step in the right direction. But observational and experimental studies can only get so far without sharpened theoretical tools to guide intuition, direct attention, and discipline policy implications. The multiplying crises of science constitute a novel challenge of dramatic social importance. Fixing science may be the next big problem.  

\newpage

\end{document}